\numberwithin{equation}{section}
\renewenvironment{proof}{\noindent{\bf Proof. }}{\hfill $\Box$}
\newtheorem{theorem}{Theorem}[section]
\newtheorem{lemma}{Lemma}[section]
\newtheorem{theo}{Theorem}[section]
\newtheorem{pr}{Proposition}[section]
\newtheorem{lem}{Lemma}[section]
\newtheorem{co}{Corollary}[section]
\newtheorem{re}{Remark}[section]
\newcommand{\R}{\mathbb{R}}
\newcommand{\Pro}{\mathbb{P}}
\newcommand{\Exp}{\mathbb{E}}
\newcommand{\ol}{\overline}
\newcommand{\td}{\tilde}
\newcommand{\df}{\,\mathrm{d}}
\newcommand{\bt}{\begin{theo}}
	\newcommand{\et}{\end{theo}}
\newcommand{\bp}{\begin{pr}}
	\newcommand{\ep}{\end{pr}}
\newcommand{\bl}{\begin{lem}}
	\newcommand{\el}{\end{lem}}
\newcommand{\bc}{\begin{co}}
	\newcommand{\ec}{\end{co}}
\newcommand{\br}{\begin{re}}
	\newcommand{\er}{\end{re}}
\newcommand{\be}{\begin{eqnarray}}
\newcommand{\ee}{\end{eqnarray}}
\newcommand{\by}{\begin{eqnarray*}}
	\newcommand{\ey}{\end{eqnarray*}}
\newcommand{\bn}{\begin{enumerate}}
	\newcommand{\en}{\end{enumerate}}
\begin{document}
	\begin{frontmatter}
		
		
		
		\title{Geometric Brownian motion with affine drift and its time-integral
			\tnoteref{label4}}
		\tnotetext[label4]{R. Feng is supported by an endowment from the State Farm Companies Foundation. P. Jiang is supported by the  Postdoctoral Science Foundation of China (No.2020M671853) and the National Natural Science Foundation of China (No. 11631004, 71532001).}
		
		\author[label1]{Runhuan Feng}
		\address[label1]{Department of Mathematics, University of Illinois at Urbana-Champaign, Illinois, USA}
		\author[label2]{Pingping Jiang\corref{cor1}}
		\address[label2]{School of Management and Economics,The Chinese University of Hong Kong, Shenzhen, Shenzhen, Guangdong, 518172, P.R. China\\ 
			School of Management, University of Science and Technology of China, Hefei, Anhui, 230026, P.R.China}
		
		\author[label3]{Hans Volkmer}
		\address[label3]{Department of Mathematical Sciences, University of Wisconsin--Milwaukee, Wisconsin, USA}
		\cortext[cor1]{Corresponding author\\
			\indent
			Email address:
			rfeng@illinois.edu (R. Feng);
			jiangpingping@cuhk.edu.cn (P. Jiang);\\
			volkmer@uwm.edu (H. Volkmer).}

		\begin{abstract}
			The joint distribution of a geometric Brownian motion and its time-integral was derived in a seminal paper by Yor (1992) using Lamperti's transformation, leading to explicit solutions in terms of modified Bessel functions. 
			In this paper, we revisit this classic result using the simple Laplace transform approach in connection to the Heun differential equation.  
		We extend the methodology to the geometric Brownian motion with affine drift and show that the joint distribution of this process and its time-integral can be determined by a doubly-confluent Heun equation. Furthermore, the joint Laplace transform of the process and its time-integral is derived from the asymptotics of the solutions.  
		{ 
		In addition, we provide an application by using the results for the asymptotics of the 
		double-confluent Heun equation in pricing Asian options. 
		Numerical results show the accuracy and efficiency of this new method.}
		\end{abstract}
		
		\begin{keyword}
			Doubly-confluent Heun equation, geometric Brownian motion with affine drift, Lamperti's transformation, asymptotics,  boundary value problem.
		\end{keyword}
		
	\end{frontmatter}
	
	\section{Introduction}
	Heun's differential equation (c.f. Ronveaux \cite{Ron}) is
	\begin{equation}\label{I:Heun}
	\frac{d^2y}{dz^2}+\left(\frac{\gamma}{z}+\frac{\delta}{z-1}+\frac{\epsilon}{z-a}\right)\frac{dy}{dz}+\frac{\eta z-q}{z(z-1)(z-a)} y=0,
	\end{equation}
	where $\gamma,\delta,\epsilon, \eta, q$ and $a\ne0,1$ are complex parameters.
	There are four regular singularities at $z=0,1,a,\infty$.
	Heun's equation has four confluent forms, the confluent, doubly-confluent, biconfluent and triconfluent Heun equations.
	In recent years the Heun equation and its confluent forms have found many applications in natural sciences. 
In the theory of black holes, the perturbation equations of massless fields for the Kerr-de Sitter geometry can be written in the form of separable equations. The equations have five definite singularities so that the analysis has been expected to be difficult. Suzuki et al. \cite{STU} showed that these equations can be transformed to Heun's equations thus the known technique could be used for the analysis of the solutions. 
	As Schr\"odinger equation for harmonium and related models may be transformed to the biconfluent Heun equation, Karwowski and Witek \cite{KW} discussed the solubility of  this equation and its applications in quantum chemistry.
Chugunova and Volkmer \cite{CH} also investigated that the set of eigenvalues of a non-self-adjoint differential operator arising in fluid dynamics were related to the eigenvalues of Heun's differential equation.

	In this paper we present applications of the doubly-confluent Heun equation in stochastic analysis.
	The doubly confluent Heun equation can be obtained from \eqref{I:Heun} as follows. First, by substitution $x=h z$, we move the singularity $z=1$ to $x=h$.
	The equation becomes
	\begin{equation}\label{I:Heun2}
	x(x-h)(x-A)\frac{d^2y}{dx^2}+\left(c x(x-A)+d(x-A)-\nu A x(x-h)\right)\frac{dy}{dx}+A(\tau x+ b)y=0,
	\end{equation}
	where
	\[ A=ah,\quad c=\gamma+\delta,\quad d=-\gamma h,\quad \epsilon=-\nu A,\quad \eta=\tau A,\quad hq=-bA .\]
	Dividing \eqref{I:Heun2} by $-A$ and letting $h\to0$, $A\to\infty$ (double confluence), we obtain
	\begin{equation}\label{I:Heun3}
	x^2\frac{d^2 y}{dx^2}+\left(\nu x^2+c x+d\right)\frac{dy}{dx}-(\tau x+b)y =0.
	\end{equation}
A study of equation \eqref{I:Heun3} can be found in Ronveaux~\cite[Part C]{Ron}.
In this paper, we consider this equation for the special case where $\nu=0$,
	\begin{equation}\label{I:Heun4}
	x^2\frac{d^2 y}{dx^2}+\left(c x+d\right)\frac{dy}{dx}-(\tau x+b)y =0.
	\end{equation}
It should be pointed out that, while this equation appeared in the literature as a special case,  its solutions were not studied in previous literature.

In this paper, we explore the close relationship between the doubly-confluent Heun equation \eqref{I:Heun4} and the geometric Brownian  motion with affine drift.  We mainly focus on this process whose infinitesimal generator is given by
\begin{align*}
(\mathcal{G}f)(x)=2x^2f''(x)+[2(\nu+1)x+1]f'(x),
\end{align*}
where $\nu \in \mathbb{R}$.
This process has many natural applications in finance and insurance. For example, it can be used to model  the price dynamics of dividend-paying stocks with constant dividend rate, see Lewis~\cite{Lew} for more details. 
While it is not an absolute representation of reality, the model can be used to approximate the dynamics of actual stocks with dividends paid at discrete points, as an improvement to the classical Black-Scholes-Merton model.	

In the classical Black-Scholes model for option pricing, stock prices are modeled by geometric Brownian motions. Among the best studied exotic options is the Asian option, whose payoff depends on the average of stock prices. In a continuous-time model, the average price is modeled by the integral of geometric Brownian motion over the life of the option divided by the length of that period. Motivated by such a pricing problem, Yor \cite{Yor92} developed a series of papers on exponential functionals of Brownian motion $B_t$. At the center of focus for financial application is the joint distribution of $(\exp\{B^{(\nu)}_t\}, A^{(\nu)}_t)$ where
	\begin{equation}\label{J:BA}
	B^{(\nu)}_t:=\nu t +B_t,\qquad A^{(\nu)}_t:=\int^t_0 \exp\{2B^{(\nu)}_s\} \df s.
	\end{equation}
	The derivation of the joint distribution was based on a combination of time change, change of measures and their connections to Bessel processes. However, in more sophisticated diffusion models, such as the geometric Brownian  motion with affine drift,  a generalization of such techniques seems to be less fruitful.  Hence, in this paper, we use the connection between the joint distribution of time-homogeneous diffusion and its time-integral and differential equations to develop computational methods. Linetsky~\cite{Lin04b} also used the connection between the time integral of stock prices and the geometric Brownian motion with affine drift for pricing Asian options. The process is also used in the risk management of an insurer's net liability in variable annuity guaranteed benefits in Feng and Volkmer~\cite{FV0, FV, FV2} and Feng and Jing~\cite{FJ}.
	
	The rest of the paper is organized as follows. In Section 2, we introduce particular solutions to the Heun equation \eqref{I:Heun4} that are required for applications to stochastic analysis for various applications in finance.
While the Heun equation is already known in analysis, the particular solutions presented here are not known previously in the literature.  The paper also proposes a method to compute these solutions to high accuracy.
	It is shown in Section 3 that the joint distribution of a solution $X_t$ to a stochastic differential equation and its integral $Y_t=\int_0^t X_s\,ds$
	is determined by appropriate solutions to a linear differential equation of the second order.
	In Section 4, the special case of $X_t$ being a geometric Brownian motion with affine drift is discussed and its connection to the Heun equation \eqref{I:Heun4} is revealed. Results from Sections 2 are used to develop a numerical method to compute the joint distribution of $X_t$ and $Y_t$. Section 5 is dedicated to a new stochastic process $Z_t$ resulting from the Lamperti transformation $X_t=Z_{Y_t}$. The distribution of $Z_t$ can again be computed by solving equation \eqref{I:Heun4}. 
	An explicit expression for the joint Laplace transform of the geometric Brownian motion with affine drift and its time integral is provided in Section 6, which relies on asymptotics of a solution to a boundary value problem involving an inhomogeneous equation corresponding to \eqref{I:Heun4}.
	{ By using the results, we provide an application in pricing Asian options on divided-paying stocks. 
	Numerical results show the accuracy and efficiency of this new method.}
	
	\section{A doubly confluent Heun equation}
	Let us consider differential equation \eqref{I:Heun4} (with $\tau$ replaced by $a$)
	\begin{equation}\label{DCHE:ode}
	x^2 y''(x)+(cx+d)y'(x)-(ax+b)y(x) =0, \quad 0<x<\infty,
	\end{equation}
	containing four real parameters $a,b,c,d$.
	The equation has two singularities: $x=0$ and $x=\infty$.
	If $d\ne 0$ then $x=0$ is an irregular singularity. If $d=0$ then $x=0$ is a regular singularity, and the general solution of \eqref{DCHE:ode} is
	\begin{equation}\label{DCHE:d0}
	y(x)=x^{\frac12(1-c)}\left\{ C_1I_\lambda(2\sqrt{ax})+C_2 K_\lambda(2\sqrt{ax})\right\} ,
	\end{equation}
	where $I_\lambda, K_\lambda$ denote modified Bessel functions, and
	\be \lambda=\sqrt{(c-1)^2+4b} . \label{DCHE:lamb}\ee
	If $a\ne 0$ then $x=\infty$ is an irregular singularity. If $a=0$
	then $x=\infty$ is a regular singularity, and the general solution of \eqref{DCHE:ode} is
	\begin{equation}\label{DCHE:a0}
	y(x)=x^{-\mu} \left\{C_1 M\left(\mu,1+\lambda,\tfrac{d}{x}\right)+
	C_2 U\left(\mu,1+\lambda,\tfrac{d}{x}\right)\right\},
	\end{equation}
	where
	\[ \mu=\tfrac12(c-1+\lambda), \]
	and $M,U$ denote Kummer functions (c.f. Olver et al.~\cite[Chapter 13]{NIST}).
	
	We now turn to the general case $d\ne 0$ and $a\ne 0$.
	We will apply methods of Olver~\cite[Chapter 7]{Olv} to introduce solutions of \eqref{DCHE:ode}. We are interested in finding a recessive solution $y_1(x)$ at $x=0$,  and
	a recessive solution $y_2(x)$ at $x=\infty$, that is, solutions with the properties that $y_1(x)/y(x)\to 0$ as $x\to 0^+$ and $y_2(x)/y(x)\to 0$ as $x\to\infty$ for every solution $y$ which is linearly independent of $y_1, y_2$, respectively.
	
	\bt \label{DCHE:y2}
	Suppose $a>0$. The differential equation \eqref{DCHE:ode} has a unique solution $y_2$ on $(0,\infty)$ such that, as $x\rightarrow \infty$,
	\be y_2(x)\sim x^{\frac14-\frac{c}2} \exp(-2\sqrt{ax}) \sum^\infty_{k=0} A_k\, x^{-\frac{k}2},\label{DCHE:y2asy}\ee
	where $A_0=1$ and $A_k$'s are determined by the following recursion for $k=1,2, \cdots$.
	\be A_k=\frac{4b-(k+\frac12-c)(k-\frac32+c)}{4\sqrt{a}k}A_{k-1}+ \frac{d}{k}A_{k-2},\qquad \label{DCHE:rec}\ee
	with the understanding that $A_{-1}=0$.
	The asymptotic formula \eqref{DCHE:y2asy} may be differentiated term-by-term.
	If $b\ge 0$ then the solution $y_2$ is positive and decreasing.
	\et
	\begin{proof}
		Chapter 7 of Olver~\cite{Olv} considers the differential equation
		\[w''(z)+f(z)w'(z)+g(z)w(z)=0,\]
		for which the singularity is located at infinity and
		\[f(z)=\sum^\infty_{k=0} \frac{f_k}{z^k},\qquad g(z)=\sum^\infty_{k=0} \frac{g_k}{z^k}.\]
		The first goal is to construct a formal solution of the form
		\begin{equation}\label{DCHE:formal}
		w(z)=e^{\lambda z}z^\mu \sum_{k=0}^\infty A_k z^{-k} .
		\end{equation}
		In the case of \eqref{DCHE:ode}, we have
		\[f_1=c, f_2=d,g_1=-a, g_2=-b,\] and all other $f_i$'s and $g_i$'s are zero. This falls into the category where $f^2_0=4g_0$ and Fabry's transformation should be used. Let $z=x^{1/2}$ and $y(x)=w(z)$. Then \eqref{DCHE:ode} becomes
		\begin{equation}\label{DCHE:Heun2}
		w''(z)+\left(\frac{2c-1}{z}+\frac{2d}{z^3}\right)w'(z)-4\left(a+\frac{b}{z^2}\right)w(z) =0 .
		\end{equation}
		We have
		\[ f_1=2c-1,\, f_3=2d,\, g_0=-4a,\, g_2=-4b ,\]
		and all other $f_i$'s and $g_i$'s are zero.
		The roots of $\lambda^2+f_0\lambda+g_0=0$ are $\lambda=\pm 2 a^{1/2}$. Since we assumed that $a>0$, these solutions are distinct.
		We are interested in a recessive solution of \eqref{DCHE:Heun2} at $x=\infty$, so we pick $\lambda=-2a^{1/2}$. We compute $\mu=\frac12-c$ from Olver~\cite[page 230, (1.10)]{Olv}.
		Therefore, we have a formal solution of \eqref{DCHE:Heun2} of the form \eqref{DCHE:formal}
		which translates to the right-hand side of \eqref{DCHE:y2asy}.
		The coefficients $A_k$ are computed from  $A_0=1$ and recursively from Olver~\cite[(1.11)]{Olv}, which can be written as \eqref{DCHE:rec}.
		By Olver~\cite[Theorem 2.1, page 232]{Olv}, the ODE \eqref{DCHE:ode} has a solution $y_2(x)$ with the asymptotic behavior \eqref{DCHE:y2asy}. Since it is the recessive solution, $y_2(x)$ is uniquely determined by \eqref{DCHE:y2asy}.
		
		Actually, \eqref{DCHE:y2asy} holds in an open sector of the complex plane containing the positive real axis. This justifies term-by-term differentiation in \eqref{DCHE:y2asy}.
		
		We now prove that the solution $y_2$ is positive and decreasing if $b\ge 0$.  For every $x_1>0$ there is $x_2>x_1$ such that $y_2(x_2)>0$ and $y_2'(x_2)<0$.
		If $y_2$ is not decreasing on $(0,x_2)$ there would be a point $x_3<x_2$ such that $y'(x_3)=0$ and $y_2$ is decreasing on $(x_3,x_2)$.
		This is impossible because \eqref{DCHE:ode} under the assumption $a>0,b\ge 0$ implies $y''(x_3)>0$. As $x_1$ is arbitrary, $y_2(x)$ is positive and decreasing for all $x\in (0,\infty)$.
	\end{proof}
	
	\br
	When $d=0$ the recursive formula \eqref{DCHE:rec} can be simplified to
	\[A_k=\frac1{2\sqrt{a}} \frac{a_k(\lambda)}{a_{k-1}(\lambda)}A_{k-1},\qquad k=1,2,\cdots,\] where $\lambda$ is given by \eqref{DCHE:lamb} and
	\[a_k(\lambda)=\frac{(4\lambda^2-1^2)(4\lambda^2-3^2)\cdots(4\lambda^2-(2k-1)^2)}{k! 8^k}.\]Hence, we get
	\[A_k=\frac1{(2\sqrt{a})^k} a_k(\lambda).\] Substituting the expression for $A_k$ in \eqref{DCHE:y2asy} yields, as $x\rightarrow \infty$,
	\be y_2(x)\sim x^{1/4-c/2} \exp (-2\sqrt{ax})\sum^\infty_{k=1} a_k(\lambda) (2\sqrt{ax})^{-k}.\label{DCHE:d0asym}\ee
	Obviously, apart from a constant multiple, $y_2(x)$ is equal to
	$x^{\frac12(1-c)}K_\lambda (2\sqrt{ax})$.
	and \eqref{DCHE:d0asym} agrees with Hankel's expansion (Olver et al.~\cite[(10.40.2)]{NIST}),
	\[K_\lambda(z)\sim \left( \frac{\pi}{2z}\right)^{-\frac12} e^{-z} \sum^\infty_{k=1} \frac{a_k(\lambda)}{z^k}.\]
	\er
	
	\bt \label{DCHE:y1}
	If $d>0$, the differential equation \eqref{DCHE:ode} has a unique solution $\hat{y}_1$ on $(0,\infty)$ such that, as $x\to0^+$,
	\be \hat{y}_1(x)\sim  \sum^\infty_{k=0} A_k\, x^{k},\label{DCHE:y1asy}\ee
	where $A_0=1$ and $A_k$'s are determined by the following recursion for $k=1,2, \cdots$.
	\be  A_k=\frac{b+(k+c-2)(1-k)}{k d}A_{k-1} +\frac{a}{kd} A_{k-2},\qquad \label{DCHE:rec2}\ee
	with the understanding that $A_{-1}=0$.  If $d<0$, the equation \eqref{DCHE:ode} has a unique solution $y_1$ on $(0,\infty)$ such that, as $x\to0^+$,
	\be y_1(x)\sim  \exp\left( \frac{d}{x} \right)x^{2-c}\sum^\infty_{k=0} A_k\, x^{k},\label{DCHE:y1asy2}\ee
	where $A_0=1$ and $A_s$ are determined by the following recursion for $k=1,2, \cdots$.
	\be  A_k=\frac{(k-c+1)k-b}{kd}A_{k-1} -\frac{a}{kd} A_{k-2},\qquad \label{DCHE:rec3}\ee
	with the understanding that $A_{-1}=0$.
	The asymptotic formulas \eqref{DCHE:y1asy}, \eqref{DCHE:y1asy2} may be differentiated term-by-term.
	If $a>0, b\ge 0$ then $y_1$ is positive and increasing.
	\et
	\begin{proof}
		We substitute $z=1/x$, $y(x)=w(z)$ in \eqref{DCHE:ode} and obtain
		\[w''(z)+\left( \frac{2-c}{z}-d\right) w'(z)-\left( \frac{a}{z^3}+\frac{b}{z^2}\right) w(z)=0,\qquad 0<z<\infty,\]
		for which the singularity at $z=\infty$ corresponds to the singularity at $x=0$ in \eqref{DCHE:ode}. We now argue as in the proof of Theorem \ref{DCHE:y2}.
		Now we have
		\[f_1=2-c, f_0=-d, g_2=-b, g_3=-a,\]
		and all other $f_i$'s and $g_i$'s are zero.
		
		If $d>0$, we consider a formal solution of the form \eqref{DCHE:formal} where $\lambda=0$, $\mu=0$ and $A_k$'s satisfy the recursion \eqref{DCHE:rec2}.
		We apply Olver~\cite[Theorem 2.1, page 232]{Olv} and obtain a (recessive) solution $\hat{y}_1$ satisfying \eqref{DCHE:y1asy}.
		It follows that  if $b>0$, as $x\rightarrow 0^+$,
		\[\hat{y}_1(x)\sim 1+\frac{b}{d} x.\] Or if $b=0$, we have as $x\rightarrow 0^+$,
		\[\hat{y}_1(x)\sim 1+\frac{a}{2d} x^2. \]
		For any $x_1>0$, there must exist $x_2<x_1$ such that $\hat{y}_1(x_2)>0$ and $\hat{y}'_1(x_2)>0$. If $\hat{y}_1$ is not increasing on $(x_2,\infty)$, then there would be a point $x_3>x_2$ such that $\hat{y}'_1(x_3)=0$ and $\hat{y}_1$ is increasing on $(x_2,x_3)$. This is impossible because \eqref{DCHE:ode} under the assumption that $a,d>0$, $b\ge 0$ implies that $\hat{y}''(x_3)>0$. Hence $\hat{y}_1(x)$ is positive and increasing on $(x_2,\infty)$ for some $x_2<x_1$, given any $x_1>0$. As $x_1$ is arbitrary, $\hat{y}_1(x)$ is increasing and positive for all $x\in (0,\infty)$.
		
		If $d<0$, we consider the formal solution of the form \eqref{DCHE:formal} with $\lambda=d$ and $\mu=c-2$. It follows immediately that $A_k$'s satisfy the recursion \eqref{DCHE:rec3}. Using the same arguments as in the previous case, we can show that $y_1$ determined by \eqref{DCHE:y1asy2} is positive and increasing on $(0,\infty)$ provided that $a>0, b\ge 0$.
	\end{proof}
	
	The formal adjoint of equation \eqref{DCHE:ode} is given by
	\begin{equation}\label{DCHE:adjoint}
	(x^2w)''-((cx+d)w)'-(ax+b) w =0.
	\end{equation}
	This equation is of the form as \eqref{DCHE:ode} with a new set of parameter values $\tilde a$, $\tilde b$, $\tilde c$, $\tilde d$, where
	\[ \tilde a=a,\quad \tilde b=b+c-2,\quad \tilde c=4-c,\quad \tilde d=-d .\]
	Equation \eqref{DCHE:adjoint} can be transformed to \eqref{DCHE:ode} by setting
	\[ y(x)=x^{2-c}\exp\left(\frac{d}{x}\right)\, w(x).\]
	In particular, the case $d<0$ of Theorem \ref{DCHE:y1} can be transformed to the case $d>0$.
	
	To our best knowledge, there is no mathematical software package programmed to evaluate the solutions $y_1$, $y_2$ introduced in Theorems \ref{DCHE:y1}, \ref{DCHE:y2}, respectively.
	We compute $y_1$ for $d>0$ as follows. We choose a positive integer $m$ and, according to \eqref{DCHE:y1asy}, approximate
	\[ \hat{y}_1(x_0) \approx \sum_{k=0}^{m-1} A_k x^k .\]
	This approximation will be good only for sufficiently large $x_0$.
	By differentiating term by term we also obtain an approximation for $y_1'(x_0)$.
	Using these approximations for $\hat{y}_1(x_0)$, $\hat{y}_1'(x_0)$ we solve the initial value problem for equation \eqref{DCHE:ode} using a suitable numerical procedure.
	If we need $\hat{y}_1(x)$ to high accuracy it is preferable to use a procedure like {\tt gear} implemented in Maple.
	
	The computation of $y_1$ for $d<0$ and $y_2$ is similar.
	
	\section{Joint distributions}

	Consider the stochastic differential equation (SDE) on an open interval $(0, \infty)$ 
	\begin{equation}\label{J:sde}
	dX_t=\beta(X_t)\df t + \gamma(X_t)\df B_t,
	\end{equation}
	with initial condition $X_0=x_0>0$ and its pathwise integral process $Y=\{Y_t, t\ge 0\}$ where
	\[ Y_t=\int_0^t X_s \df s .\]
	Then the pair $(X, Y)$ satisfies the SDE system
	\be \label{bivar1}
	\df X_t &=& \beta(X_t ) \df t+ \gamma(X_t) \df B_t,\\
	\df Y_t &=& X_t \df t, \label{bivar2}
	\ee under the probability measure $\Pro^{x_0}$ we have $\Pro^{x_0}(X_0=x_0, Y_0=0)=1$. 
	
	The work of H\"{o}rmander \cite{Hor} provides a condition for the smoothness of fundamental solutions to hypoelliptic second order differential equations, known as H\"{o}rmander's ``brackets condition" in the literature. Hairer \cite{Hai} provides an extension to parabolic equations in the context of transition probabilities for diffusions. Consider a second order differential operator
	\[\mathcal{P}=\sum^r_{j=1} \mathcal{X}^2_j+\mathcal{X}_0,\]
	where $ \mathcal{X}_0,  \mathcal{X}_1, \cdots,  \mathcal{X}_r$ denote first order homogeneous differential operators in an open set $\Omega \subset \R^n$ with $C^\infty$ coefficients and $c\in C^\infty(\Omega)$. Define a collection of vector fields $\mathcal{V}_k$ by
	\[\mathcal{V}_0=\{\mathcal{X}_i: i>0\}, \qquad \mathcal{V}_{k+1}=\mathcal{V}_k \cup \{[\mathcal{U}, \mathcal{X}_j]: \mathcal{U} \in \mathcal{V}_k \mbox{ and } j\ge 0\}.\]
	Also define the vector space $\mathcal{V}_k(x)=\mathrm{span}\{V(x): V\in \mathcal{V}_k\}.$ The parabolic H\''{o}rmander's condition holds if $\cup_{k\ge 1} \mathcal{V}_k(x)=\R$ for every $x\in \R$. The Lie bracket $[\cdot, \cdot]$ is given by $[\mathcal{X}_{j_i}, \mathcal{X}_{j_k}]=\mathcal{X}_{j_i}\mathcal{X}_{j_k}-\mathcal{X}_{j_k}\mathcal{X}_{j_i}.$ The infinitesimal generator satisfies the H\"ormander condition, then the Markov process admits a smooth density with respect to Lebesgue measure and the corresponding Markov semigroup maps bounded functions into smooth functions.
	
	Note that the infinitesimal generator of the two-dimensional process $(X_t, Y_t)$ in \eqref{bivar1} and \eqref{bivar2} is given by
	\[\mathcal{L}= \mathcal{X}^2_1+\mathcal{X}_0, \]
	where the two differential operators are defined by
	\by \mathcal{X}_0&=&\beta_\ast(x) \frac{\partial}{\partial x}+x \frac{\partial}{\partial y},\qquad \beta_\ast(x)=\beta(x)-\frac12 \gamma'(x) \gamma(x) \\
	\mathcal{X}_1&=&\gamma_\ast(x)\frac{\partial }{\partial x},\qquad \gamma_\ast(x)=\frac1{\sqrt{2}}\gamma(x).
	\ey It is easy to show that
	\begin{align*}
	[\mathcal{X}_0, \mathcal{X}_1]&=W(\beta_\ast, \gamma_\ast)\frac{\partial }{\partial x}-\gamma_\ast(x) \frac{\partial }{\partial y};\\
	[  [\mathcal{X}_0, \mathcal{X}_{1}], \mathcal{X}_0] \,&=W(W(\beta_\ast, \gamma_\ast),\beta_\ast)\frac{\partial }{\partial x}+[2\beta_\ast(x)\gamma'_\ast(x)-\gamma_\ast(x) \beta'_\ast(x)]\frac{\partial }{\partial y}
	\end{align*} where $W$ is the Wronskian. For example,  $[\mathcal{X}_0, \mathcal{X}_1]$ and $[  [\mathcal{X}_0, \mathcal{X}_{1}], \mathcal{X}_0]$ are linear independent at any point, unless there are points at which
	\be \label{check}
 [2\beta_\ast(x)\gamma'_\ast(x)-\gamma_\ast(x) \beta'_\ast(x)] W(\beta_\ast, \gamma_\ast) +\gamma_\ast(x) W(W(\beta_\ast, \gamma_\ast),\beta_\ast)=0.\ee
	

	Suppose that \eqref{check} is not true for any point $x>0$. Then H\"ormander condition implies that there exists a smooth joint density function of $(X_t, Y_t)$, denoted by $p(t,x,y)=p(t,x,y; x_0)$ , i.e.
	\[\Pro^{x_0}(X_t \in \df x , Y_t \in \df y)=p(t,x,y) \df x \df y.\] The corresponding Kolmogorov forward PDE for $p(t,x,y)$ is given by
	\[\frac{\partial p}{\partial t}=-\frac{\partial}{\partial x} (\beta(x) p)-\frac{\partial}{\partial y} (xp)+\frac12 \frac{\partial^2}{\partial x^2}(\gamma(x)^2 p),\] where the assumption $P^{x_0}(X_0=x_0, Y_0=0)=1$ implies that
	\[p(0,x,y)=\delta(x-x_0)\delta(y),\] where $\delta$ is the Dirac delta function for which
	\[\int^\infty_0 f(x) \delta(x) \df x=f(0),\qquad \int^\infty_{-\infty} \delta(x)\df x =1.\]
	We can solve this equation by using the Laplace transform
	\begin{equation}\label{J:laplace}
	q(x):=q(s,x,w;x_0)=\int_0^\infty \int_0^\infty \exp(-st)\exp(-w y)p(t,x,y)\,dt\,dy.
	\end{equation}
	For fixed $s,w>0$ we obtain the following ODE for $q(x)$
	\begin{equation}\label{J:ode}
	-\frac12 (\gamma(x)^2 q)''+(\beta(x) q)'+(wx+s) q=\delta(x-x_0) .
	\end{equation}
	This equation is the same as appears in the definition of a Green's function at $x=x_0$.
	We choose a fundamental system $q_1(x), q_2(x)$ of solutions of \eqref{J:ode} with right-hand side $0$.
	Then we write the solution $q(x)$ of \eqref{J:ode} in the form
	\begin{equation}\label{J:q}
	q(x)=\begin{cases} C_1 q_1(x) & \text{if $0<x\le x_0$,} \\
	C_2 q_2(x) & \text{if $x_0<x<\infty$.}
	\end{cases}
	\end{equation}
	Definition \eqref{J:laplace} shows that $q(x)\ge 0$ and $\int^{\infty}_{0} q(x) \df x \le 1/s<\infty$, so $q_1(x)$, $q_2(x)$ have to
	be integrable on $(0,x_0)$ and $(x_0,\infty)$, respectively.
	The constants $C_1$ and $C_2$ have to be chosen such that $q(x)$ is continuous at $x=x_0$, i.e.
	\be C_1 q_1(x_0)=C_2 q_2(x_0),\label{cond1}\ee and such that
	\be \frac12 \gamma(x_0)^2 (C_1 q'_1(x_0)-C_2 q'_2(x_0))=1.\label{cond2}\ee The latter equation follows from \eqref{J:ode} by integrating both sides $x_0-\epsilon$ to $x_0+\epsilon$, then letting $0<\epsilon\to 0$ and simplifying with \eqref{cond1}. Therefore, we obtain
	\begin{equation}\label{J:q2}
	q(x)=C\begin{cases} q_2(x_0)q_1(x) & \text{if $0<x\le x_0$,} \\
	q_1(x_0)q_2(x) & \text{if $x_0<x<\infty$,}
	\end{cases}
	\end{equation}
	where
	\begin{equation}\label{J:C}
	C=\frac2{\gamma(x_0)^2} \frac{1}{q_2(x_0)q_1'(x_0)-q_2'(x_0)q_1(x_0)}.
	\end{equation}
	
	We will consider two examples in this work. In this section
	we take $X$ to be geometric Brownian motion and reproduce the density $p(t,x,y)$, which is known from the work of Yor. 
	In the next section, we take $X$ to be geometric Brownian motion with affine drift where no previous results on the joint density $p(t,x,y)$ is known in the current literature.
	
	The solution of the SDE \eqref{J:sde} with $\beta(x)=(2\nu+2)x$, $\nu\in\R$, $\gamma(x)=2x$, $x_0=1$ is the geometric Brownian motion
	\[ X_t=\exp(2\nu t+2B_t)\]
	with integral $Y_t=\int_0^t X_s\,ds$.
	Then the homogeneous form of the ODE \eqref{J:ode} is
	\begin{equation}\label{J:ode2}
	-2(x^2 q)''+2(\nu+1)(xq)'+(wx+s) q =0.
	\end{equation}
	This is equation \eqref{DCHE:ode} with parameter values
	\[ a=\frac{w}{2},\quad b=\frac{s}{2}+\nu-1,\quad c=3-\nu,\quad d=0 .\]
	According to \eqref{DCHE:d0}, \eqref{J:ode2} has the fundamental system
	\[ q_1(x)=x^{\nu/2-1} I_\lambda(\sqrt{2wx}),\quad q_2(x)=x^{\nu/2-1} K_\lambda(\sqrt{2wx}) ,\]
	where
	\[ \lambda:=\sqrt{2s+\nu^2}. \]
	Note that, apart from constant multiples, $q_1(x)$ and $q_2(x)$ are the only solutions of \eqref{J:ode2} that are integrable on $(0,1)$ and $(1,\infty)$, respectively.
	Using the Wronskian Olver et al. \cite[10.28.2]{NIST}, we obtain that
	\[ q_1(x)q_2'(x)-q_1'(x)q_2(x)=-\frac12x^{\nu-3}. \]
	Setting $x=1,$ we obtain $C=1$ in \eqref{J:C}.
	Therefore, \eqref{J:q2} gives
	\begin{equation}\label{2:result}
	q(s,x,w)= x^{\nu/2-1} \begin{cases} K_\lambda(\sqrt{2w})I_\lambda(\sqrt{2wx}) &\text{if $0<x \le1$}, \\
	I_\lambda(\sqrt{2w})K_\lambda(\sqrt{2wx}) &\text{if $1<x<\infty$}.
	\end{cases}
	\end{equation}
	Essentially we have solved our problem. It remains to invert the Laplace transforms.
	We may use the formula Gradshteyn and Ryzhik~\cite[6.653]{GraRyz}
	\begin{equation}\label{2:grad}
	\int_0^\infty \exp\left(-\frac{z}2 -\frac{a^2+b^2}{2z  }\right)I_\lambda\left(\frac{ab}{z}\right)\frac{dz}{z} =2\begin{cases} I_\lambda(a)K_\lambda(b) & \text{if $0<a<b $}\\
	K_\lambda(a)I_\lambda(b) & \text{if $0<b<a$}
	\end{cases}
	\end{equation}
	which holds for $\lambda>-1$.
	If we set $a=\sqrt{2w x}$, $b=\sqrt{2w}$, $z=2w y$ we obtain
	\begin{equation}\label{2:result2}
	\int_0^\infty \exp(-st)p(t,x,y)\,dt =\frac1{2y} x^{\nu/2-1} \exp\left(-\frac{1+x}{2y}\right) I_\lambda\left(\frac{\sqrt{x}}{y}\right)
	\end{equation}
	which is a known result.
	We can also invert the Laplace transform with respect to $t$ employing the Hartmann-Watson density. Note that Yor obtained the same result using much more complex probabilistic arguments based on Lamperti's transformation and Girsanov change of measure in Yor \cite{Yor92}. A detailed account of Hartman-Watson density function can be found in Barrieu, Rounault and Yor~\cite{BRY}.

	\section{Joint distribution of Geometric Brownian motion with affine drift and its integral}\label{sec:asym}
	
	Let $\nu\in\R$, $x_0>0$, $\beta(x)=(2\nu+2)x+1$, $\gamma(x)=2x$. It is easy to verify that \eqref{check} does not hold for $x>0$. Then the solution of SDE \eqref{J:sde} is
	the geometric Brownian motion with affine drift
	\be \label{4.1}
	X_t=\exp(2\nu t+2 B_t)\left(x_0 +\int_0^t \exp(-(2\nu s+2B_s))\,ds\right) . \label{GBA:defX}\ee
	Its integral is
	\be Y_t=\int_0^t X_s\,ds .\label{GBA:defY}\ee

	Our goal is to determine the joint density function $p(t,x,y)$ of $(X_t,Y_t)$ by the method proposed in Section 3.
	The corresponding ODE \eqref{J:ode} is
	\begin{equation}\label{GBA:ode}
	-2(x^2 q)''+\left\{(2(\nu+1)x+1) q\right\}'+(wx+s) q =\delta(x-x_0).
	\end{equation}
	If we replace the right-hand side of \eqref{GBA:ode} by $0$, we obtain the doubly-confluent Heun equation \eqref{DCHE:ode} with parameter values
	\[a=\frac{w}2>0,\quad b=\frac{s}2+\nu-1,\quad c=3-\nu,\quad d=-\frac12 <0.\]
	We choose $q_j(x)=y_j(x)$, $j=1,2$, with $y_1$, $y_2$ introduced in Theorems \ref{DCHE:y1}, \ref{DCHE:y2}.
	Note that, apart from constant multiples, $q_2(x)$ is the only solution of \eqref{GBA:ode} which is integrable on $(x_0,\infty)$.
	The choice of $q_1(x)$ can be justified as follows. We have
	$0\le q(s,x,w)\le q(s,x,0)$,
	and $q(s,x,0)$ is the Laplace transform of the transition density function of $X_t$. It is known that $q(s,x,0)\to0$ as $x\to 0^+$. Therefore,
	$q(s,x,w)\to 0$ as $x\to 0^+$. Now, apart from constant multiples, $q_1(x)$ is the only solution of \eqref{GBA:ode} that tends to $0$ as $x\to 0^+$.
	
	We can then write $q(x)$ defined by \eqref{J:laplace} in the form \eqref{J:q2},
	where
	\[
	C=\frac1{2x_0^2} \frac{1}{q_2(x_0)q_1'(x_0)-q_2'(x_0)q_1(x_0)}.
	\]
	We cannot expect a simple formula for the Wronskian of solutions $q_1$ and $q_2$
	but we can compute $C$ numerically.
	
	We use the following algorithm to determine the joint density $p(t,x,y)$. In the first step, we use \eqref{DCHE:y1asy2} to approximate initial values for $q_1$ and \eqref{DCHE:y2asy} for $q_2$.
	In order to increase the accuracy of the computation, we define
	\[u(x)=\exp\left(  -\frac{d}{x}\right) x^{c-2} q_1(x),\]
	which satisfies the ODE
	\be x^2 u''(x)+[(4-c)x-d]u'(x)-(ax+b+c-2)u(x)=0.\label{GBA:uode}\ee
	Thus the asymptotics of $u(x)$ is determined by the power series in \eqref{DCHE:y1asy2}. We specify the number $m$ of terms in the partial sum approximation and use an algorithm to determine the small initial point $x_\ell>0$ such that the approximation error is less than $10^{-k}$, roughly $A_m x^m_l<10^{-k}$. Then we use numerical methods to find solutions to \eqref{GBA:uode} with initial conditions at $x_\ell$, which in turn produces the value of $q_1$ in the interval $(x_l,x_0)$.
	
	Similarly, we define
	\[w(z)=z^{c-1/2} \exp(2\sqrt{a}z)q_2(z^2).\]
	Then
	\[w''(z)+\left(-4\sqrt{a}+\frac{2d}{z^3}  \right) w'(z)+\left( \frac{-\frac34 -c^2+2c-4b}{z^2}-\frac{4d\sqrt{a}}{z^3}+\frac{d(1-2c)}{z^4} \right) w(z)=0.\]
	The asymptotics of $w(z)$ is given by the power series in \eqref{DCHE:formal}. For each specified number $m$ of terms in the partial sum approximation, we determine the large initial point $x_r$ such that the approximation error of $w(z)$ is less than $10^{-k}$, roughly $A_m x^{-m/2}_r<10^{-k}.$ We use numerical methods to solve the initial value problem for $w(z)$ with initial conditions determined by approximations, thereby leading to solution $q_2$ in $(x_0,x_r)$. Combining the computations of $q_1$ and $q_2$, we find the Laplace transform $q$ using \eqref{J:q2}.
	
	In the second step, we use two-dimensional Laplace inversion routines to find $p$ at various point of $(x,y)$. Details on various types of two-dimensional Laplace inversion can be seen in Abate and Whitt~\cite{AbaWhi06}. Here we obtain the results using the Talbot-Gaver-Stehfest algorithm
	\[p(t,x,y)=\frac{2\ln 2}{5t y} \sum^{M-1}_{k_1=0} \Re \left\{ \gamma_{k_1} \sum^{2M}_{k_2=1} \zeta_{k_2} q\left( \frac{\delta_{k_1}}{t},x, \frac{k_2 \ln 2}{y}\right) \right\},\]
	where
	\by \delta_0&=&\frac{2M}{5}, \delta_k=\frac{2k\pi}{5}\left(\cot\left(\frac{k\pi}{M}\right)+i\right), 0<k<M, \gamma_0=\frac{e^{\delta_0}}2,\\
	\gamma_k&=&\left[1+\frac{ik\pi}{M}\left(1+\cot^2\left(\frac{k\pi}{M}\right)\right)-i\cot\left(\frac{k \pi}{M} \right)\right] e^{\delta_k}, 0<k<M. \\
	\zeta_k&=&(-1)^{M+k} \sum^{k\wedge M}_{j=\lfloor (k+1)/2\rfloor} \frac{j^{M+1}}{M!} \dbinom{M}{j}\dbinom{2j}{j} \dbinom{j}{k-j}.\ey To test the accuracy of the results, we also use the Euler-Gaver-Stehfest algorithm
	\[p(t,x,y)=\frac{10^{M/3}\ln 2}{5 t_1 t_2} \sum^{2M}_{k_1=0} \eta_{k_1} \sum^{M-1}_{k_2=0} \Re \left\{  \gamma_{k_2} q\left( \frac{\beta_{k_1}}{t_1}, \frac{\delta_{k_2}}{t_2}\right)+\overline{\gamma}_{k_2}q\left( \frac{\beta_{k_1}}{t_1},\frac{\ol{\delta}_{k_2}}{t_2}   \right)\right\},\]
	where $\ol{\gamma}_k$ is the complex conjugate of $\gamma_k$ and
	\by
	\eta_0&=&\frac12, \eta_k=(-1)^k, 1\le k\le M, \eta_k=(-1)^k \sum^M_{i=k-M} \left(\begin{array}{c} M\\i\end{array}\right)2^{-M}, M<j\le 2M.
	\ey In this numerical example, we choose $\nu=1.2, M=7$. According to Abate and Whitt~\cite{AbaWhi06}, both algorithms are expected to be accurate up to $0.6 M \approx 4$ significant digits for ``good transforms". We show the values of $p(1,6,4)$ rounded up to $7$ digits in Table \ref{tbl:dblinv} using the two inversion algorithms with various choices of precision. They all agree up to four decimal places.
	
	\begin{table}[htb] \centering \begin{tabular}{|c|c||c|c|}
			\hline
			$m$ & $k$ &   Talbot-Gaver-Stehfest & Euler-Gaver-Stehfest  \\ \hline \hline
			30 & 15 &   $0.0047812$ & $0.0047684$ \\ \hline
			30 & 18&    $0.0047812$ & $0.0047684$\\ \hline
		\end{tabular}
		\caption{Joint density $p(1,6,4)$.} \label{tbl:dblinv}
	\end{table}
	
	The method can be easily extended to the geometric Brownian motion with affine drift
	\[X_t=\exp(2\nu t+2 B_t)\left(x_0 -\int_0^t \exp(-(2\nu s+2B_s))\,ds\right), \qquad Y_t=\int^t_0 X_s \df s,\] with $X_0=x_0>0$ and $Y_0=0$, in which case the Laplace transform $q(x)$ satisfies the ODE
	\[-2(x^2 q)''+\left\{(2(\nu+1)x-1) q\right\}'+(wx+s) q =0,
	\] which is also a special case of \eqref{DCHE:ode} with
	\[a=\frac{w}2>0, \quad b=\frac{s}2+\nu-1, \quad c=3-\nu, \quad d=\frac12 >0.\]
	We provide the results on $p(1,2,2)$ with $\nu=1.2$ and $M=7$ in Table \ref{tbl:dblinv2}.

	\begin{table}[htb] \centering \begin{tabular}{|c|c||c|c|}
			\hline
			$m$ & $k$ &   Talbot-Gaver-Stehfest & Euler-Gaver-Stehfest  \\ \hline \hline
			30 & 15 &   $0.016420$ & $0.016413$ \\ \hline
			30 & 18&    $0.016420$ & $0.016413$\\ \hline
		\end{tabular}
		\caption{Joint density $p(1,2,2)$.} \label{tbl:dblinv2}
	\end{table}
	
	\section{A diffusion process from Lamperti's transformation} 
	
	We recall from Yor~\cite{Yor92} that the geometric Brownian motion and its time-integral are connected through Lamperti's transformation, i.e.
	\[ \exp\{B^{(\nu)}_t\}=\rho^{(\nu)}_{A^{(\nu)}_t},\qquad t \ge 0,\]
	where $B^{(\nu)}_t$, $A_t^{(\nu)}$ are defined by \eqref{J:BA}, and $\rho$ is a Bessel process with index $\nu$ starting from $1$. Using the same idea, we can connect the geometric Brownian motion with affine drift and its time-integral through another diffusion process.
	
	\bt
	Let $X_t$ be the solution of the SDE \eqref{J:sde} with $\beta(x)=\mu x+1$, $\gamma(x)=\sigma x$, where $\mu\in\R$, $\sigma>0$ and $X_0=x_0>0$.
	Then
	\be X_t =  Z_{\int^t_0 X_s \df s},\qquad t \ge 0,\label{idt}\ee
	where $Z$ is a diffusion process determined by the following SDE
	\be \df Z_t =\left( \mu +\frac1{Z_t} \right)\df t +\sigma \sqrt{Z_t} \df B_t. \label{sde}\ee
	\et
	\begin{proof}
		Let $\tau(t)=\inf\{u: \int^u_0 X_s \df s >t\}$. Note that $\int^t_0 X_s \df s$ is a process with continuous sample path.
		Then $\int_0^{\tau(t)} X_s\,ds=t$. Thus,
		\[X_{\tau(t)}=x_0+\int^{\tau(t)}_0 (\mu X_s +1) \df s+\sigma \int^{\tau(t)}_0 X_s \df B_s=x_0+\mu t +\int^{\tau(t)}_0 \df s+\sigma \int^{\tau(t)}_0 X_s \df B_s.\]
		Using the time change formula for Ito integrals (Oksendal \cite[Theorem 8.5.7, p156]{Oks}), we obtain
		\[\int^{\tau(t)}_0 X_s \df B_s=\int^t_0 X_{\tau(r)} \sqrt{\tau'(r)} \df W_r,\]
		where $W_t=\int^{\tau(t)}_0 \sqrt{X_s} \df B_s$ is also a Brownian motion. Note that $\tau'(r)=1/X_{\tau(r)}$ using the derivative of inverse function. Thus,
		\[\int^{\tau(t)}_0 X_s \df B_s=\int^t_0  \sqrt{X_{\tau(r)}} \df W_r.\]
		We also have
		\[ \int^{\tau(t)}_0 \df s = \int^t_0 \tau'(r) \df r=\int^t_0 \frac1{X_{\tau(r)}} \df r.\]
		Let $Z_t:=X_{\tau(t)}$ for $t\ge 0$. Then,
		\[ Z_t=Z_0+ \mu t+ \int^t_0 \frac1{Z_r} \df r+ \sigma \int^t_0  \sqrt{Z_r} \df W_r.\]
		Thus the claim follows immediately after reversing the time change.
	\end{proof}
	
	To the authors' best knowledge, this diffusion process $\{Z_t, t \ge 0\}$ was not previously studied in the probability literature. When starting from a positive initial value and $\mu<0$, this process is always positive and possesses a mean-reverting property, which is a desirable property for modeling many physical phenomenons. If we remove the Brownian perturbation by setting $\sigma=0$, then
	\[Z_t=-\frac1{\mu} \left( 1+W(e^{-C\mu^2-\mu^2 t-1})\right),\] where $W$ is the Lambert W-function.

	The results from Section 2 make it possible to compute the transition density of the diffusion process $Z_t$ without having to use the often expensive Monte Carlo simulations. An account of the theory of time-homogeneous diffusion processes can be found in Borodin and Salminen~\cite[Chapter 2]{BS}.
	In this case, it is easier to compute the transition density function $\mathfrak{p}$ with respect to speed measure, i.e.
	\be \Pro^{z_0}(Z_t \in \df z)=p(t,z_0,z) \df z=\mathfrak{p}(t,z_0,z)\mathfrak{m}(z) \df z,\label{Zden}\ee
	where $Z_0=z_0>0$. The speed density and scale density of the diffusion process \eqref{sde} are given by
	\[\mathfrak{m}(z)=\frac{2}{\sigma^2} z^{\frac{2\mu}{\sigma^2}-1} \exp\left\{ \frac{2}{\sigma^2}\left(1-\frac1z\right) \right\},\quad \mathfrak{s}(z)=z^{-\frac{2\mu}{\sigma^2}}\exp\left\{  \frac2{\sigma^2}\left( \frac1y-1 \right) \right\}.\] 
	Then $p$ satisfies the forward Kolmogorov equation
	\[ \frac{\partial p}{\partial t} =-\frac{\partial}{\partial z}(\beta(z) p)+\frac12\frac{\partial^2}{\partial z^2} (\gamma(z)^2 p),\]
	while $\mathfrak{p}$ solves the backward Kolmogorov equation.
	\[ \frac{\partial\mathfrak{p}}{\partial t} =\beta(z) \frac{\partial \mathfrak p}{\partial z}+\frac12\gamma(z)^2 \frac{\partial^2 \mathfrak p}{\partial z^2},
	\]
	where $\beta(z)=\mu +1/z$, $\gamma(z)=\sigma\sqrt{z}$.
	Consider the Laplace transform
	\[\mathfrak{q}(s,z_0,z)=\int^\infty_0 e^{-st} \mathfrak{p}(t,z_0,z) \df t.\]
	Then $\mathfrak{q}(z):=\mathfrak{q}(s,z_0,z)$ satisfies the ODE
	\be \frac{\sigma^2}{2} z \mathfrak{q}''(z)+\left( \mu+\frac1{z}  \right) \mathfrak{q}'(z)-s\mathfrak{q}(z)=0,\qquad z>0.\label{green}\ee
	This ODE is a special case of \eqref{DCHE:ode} with parameter values
	\[a=\frac{2s}{\sigma^2}>0,\quad b=0,\quad c=\frac{2\mu}{\sigma^2},\quad d=\frac2{\sigma^2}>0.\]
	
	Let $\mathfrak{q}_1$, $\mathfrak{q}_2$ be the positive monotone solutions $y_1$, $y_2$ introduced in Theorems \ref{DCHE:y1}, \ref{DCHE:y2}, respectively.
	Then
	\be
	\mathfrak{q}(s,z_0,z)=\frac1{w(s,z)}\begin{cases} \mathfrak{q}_1(z) \mathfrak{q}_2(z_0) & \text{if $0<z<z_0$},\\
		\mathfrak{q}_1(z_0) \mathfrak{q}_2(z) & \text{if $z_0< z$,}
		\label{greensoln}
	\end{cases}
	\ee
	where $w(z)$ denotes the Wronskian
	\[w(z)=\frac1{\mathfrak{s}(z)} [\mathfrak{q}'_1(z) \mathfrak{q}_2(z)-\mathfrak{q}_1(z) \mathfrak{q}'_2(z)].\]
	The following algorithm is used in this numerical example to calculate the transition
	density $p(t,z_0,z)$.
	
	\bn
	\item Approximate the values of $\mathfrak{q}_1$ and $\mathfrak{q}_2$ at some initial points by the asymptotics \eqref{DCHE:y1asy} and \eqref{DCHE:y2asy}. Take the approximations as initial conditions and use numerical methods for initial value problems to determine $\mathfrak{q}_1$ and $\mathfrak{q}_2$.
	\item Substitute the values of $\mathfrak{q}_1$ and $\mathfrak{q}_2$ in \eqref{greensoln} to determine values of the Laplace transform $\mathfrak{q}$. Use numerical methods for inverting the Laplace transform to evaluate $\mathfrak{p}.$
	\en
	\begin{figure}[ht!]
		\centering
		\includegraphics[width=0.6\textwidth]{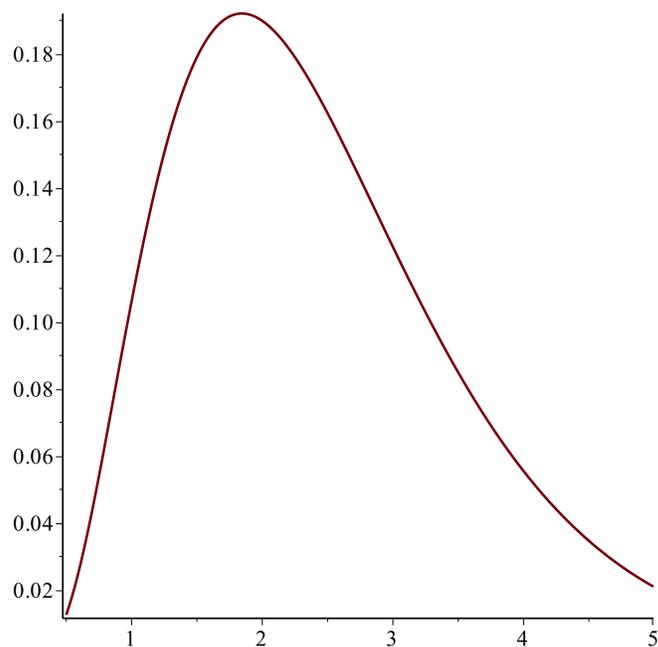}
		\caption{Probability density function $p(1,1,z)$ of the process $Z$}
		\label{fig:den}
	\end{figure}
	
	In our example, we use the parameters $\mu=0.8, \sigma=1$ and initial position $z_0=1$. In the first step, we find approximation of $\mathfrak{q}_1(1/25)$ and $\mathfrak{q}_2(25)$ using the first $40$ terms in the asymptotics \eqref{DCHE:y1asy} and \eqref{DCHE:y2asy}. Then we use Maple's numerical method {\tt gear} to determine solutions to the initial value problems, which yield solutions of $\mathfrak{q}_1$ and $\mathfrak{q}_2$ in the interval $(1/25, 25)$. To test the accuracy of the results, we use two numerical methods for inverting Laplace transforms, namely, the Gaver-Stehfest algorithm
	\[\mathfrak{p}(t,z_0,z)\approx \sum^n_{k=1} (-1)^{n-k}\frac{k^n}{k!(n-k)!} \td{f}_k(t,z_0,z),\]
	where
	\[\td{f}_k(t,z_0,z)=\frac{\ln 2}{t} \frac{(2k)!}{k!(k-1)!} \sum^{k}_{h=1} (-1)^h \left( \begin{array}{c} k\\h \end{array} \right) \mathfrak{q}\left((h+k)\frac{\ln 2}{t},z_0,z\right),\]
	and the Euler algorithm (cf. Abate and Whitt \cite{AbaWhi95})
	\[\mathfrak{p}(t,z_0,z)\approx \sum^m_{k=0} \left( \begin{array}{c} m\\ k \end{array} \right) 2^{-m} s_{n+k}(t),\]
	where
	\[s_n(t)=\frac{e^{A/2}}{2t} \Re (\mathfrak{q})\left(\frac{A}{2t},z_0,z\right)+\frac{e^{A/2}}{t} \sum^n_{k=1} (-1)^k \Re(\mathfrak{q})\left(\frac{A+2k \pi i}{2t},z_0,z\right).\]
	Because the two algorithms are sufficiently different, we can verify the accuracy of our numerical method by observing the results on probability density function $\mathfrak{p}$ from both algorithms which agree up to at least four decimal places in all calculations. As recommended by Abate and Whitt~\cite[page 38]{AbaWhi95}, we choose $A=18.4$. Figure \ref{fig:den} displays the graph of the probability density function $p(1,1,z)$ of the diffusion process $Z$.

	\section{Joint Laplace transform of the geometric Brownian motion with affine drift and its time-integral }
	{ 
	An alternative form of the GBM with affine drift to \eqref{4.1} is given by 
	\be \df X_t=(rX_t -\delta) \df t+\sigma X_t \df B_t,\qquad X_0=x_0>0.\label{AsianSDE}\ee Since such a model typically arises in the context of option pricing in finance literature, we use financial interpretation of model parameters. Assume that $X$ represents the dynamics of stock prices. Dividends are paid out at a constant rate $\delta>0$ per time unit and the risk-free continuously compounding interest rate is given by $r$ per time unit. 	It should be pointed out that in this model stock prices may hit zero at one point, which is considered the time of ruin, prior to a fixed option maturity date $T$. In this case, the process is assumed to be absorbed at zero once it hits zero. 
One can also find that  the SDE \eqref{AsianSDE} is a special case of \eqref{J:sde} with $\gamma(x)=\sigma x$ and $\beta(x)=rx-\delta$.  

	We focus on the joint Laplace transform of the process and its time-integral which plays an important role in an application to Asian option.  The Laplace transform is defined by 
\be h(x_0,T; w):=\Exp_{x_0}\left[ e^{-w\int^T_0 X_t \df t-\lambda X_T}  \right],\label{hrep}\ee where $\lambda, w>0$. Then it follows immediately from the Feymann-Kac formula that $h$ satisfies the PDE for $0<t<T$ and $x>0$
\be \frac{\partial h}{\partial t}+w x h=( rx-\delta) \frac{\partial h}{\partial x} +\frac{\sigma^2}{2} x^2 \frac{\partial^2 h}{\partial x^2}, \label{pdeh}\ee subject to 
\[ \begin{cases} h(x_0, 0)=e^{-\lambda x_0}, & x_0>0; \\ h(0, t)=1, & 0<t<T.\end{cases} \]
Consider the Laplace transform $\tilde{h}(x, s):=\int^\infty_0 e^{-st} h(x,t) \df t$ for some $s>0$. Taking Laplace transforms with respect to $t$ on both sides of \eqref{pdeh} we obtain the following ODE of $\tilde{h}(x)=\tilde{h}(x,s)$ for $x>0$
\begin{align}\label{odeh}
\frac{\sigma^2}{2} x^2  \tilde{h}''(x) +(rx-\delta) \tilde{h}'( x) -(s  +wx) \tilde{h}(x)=-e^{-\lambda x_0}, 
\end{align} 
subject to boundary conditions
\be \label{bd1} \tilde{h}(0)=\frac{e^{-\lambda x_0}}{s},\\
\label{bd2} \lim_{x\to \infty}\tilde{h}(x)=0,
\ee where the second condition comes from the interpretation of probabilistic representation \eqref{hrep} as $x_0\rightarrow \infty$.

We first prove the existence of a unique solution to the boundary value problem and then present the asymptotics of its solution. Let us consider the linear differential equation
	\begin{equation}\label{ode1}
		x^2y''+(cx+d)y'-(ax+b)y =0 ,\quad x>0,
	\end{equation}
	where $a>0$, $b>0$, $d<0$ and $c\in\R$. 
	Based on the discussion of section 2, we know that  there is a positive number $v$ such that the Wronskian $W$ of $y_1, y_2$ is 
	\begin{equation}\label{W}
	W(x)=-v x^{-c} \exp\tfrac{d}{x}<0 .
	\end{equation}
	Then we will further
consider the inhomogeneous linear differential equation 
	\begin{equation}\label{ode2}
		x^2y''+(cx+d)y'-(ax+b)y =f(x) ,\quad x>0.
	\end{equation}
	We have the following result.
	\begin{theorem}\label{t1}
		Let $f$ be a bounded real-valued continuous function on $(0,\infty)$, and let $\theta\in\R$.
		Then there exists a unique solution $y$ of \eqref{ode2} with the properties
		\begin{equation}\label{prop}
			\lim_{x\to0^+} y(x)=\theta,\quad \lim_{x\to\infty} y(x)=0. 
		\end{equation} 
	\end{theorem}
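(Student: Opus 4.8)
The plan is to reduce the boundary value problem to the construction of a single particular solution that decays at both ends, and then to add an explicit multiple of the homogeneous solution $y_2$ in order to realize the prescribed value $\theta$ at the origin. Recall from Theorems \ref{DCHE:y1} and \ref{DCHE:y2}, applied with the present parameters $a>0$, $b>0$, $d<0$, that the homogeneous equation \eqref{ode1} has a solution $y_1$, recessive at $0$, which is positive and increasing with $y_1(x)\sim e^{d/x}x^{2-c}\to0$ as $x\to0^+$ (see \eqref{DCHE:y1asy2}), and a solution $y_2$, recessive at $\infty$, which is positive and decreasing with $y_2(x)\to0$ as $x\to\infty$ (see \eqref{DCHE:y2asy}). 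These are linearly independent, since $y_1$ does not vanish at $\infty$ whereas $y_2$ does. A preliminary but essential observation is that $y_2(0^+)$ is a finite, strictly positive number: near the irregular singular point $x=0$ the solution space of \eqref{ode1} is spanned by $y_1$ (with limit $0$) and a dominant solution whose leading behaviour is the constant $1$, so writing $y_2$ in this basis shows $y_2(0^+)$ is finite, and linear independence forbids the coefficient of the dominant solution from vanishing; monotonicity then gives $y_2(0^+)=\lim_{x\to0^+}y_2(x)\ge y_2(1)>0$.

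Next I would construct the decaying particular solution by the variation-of-parameters (Green's function) formula associated to the pair $y_1,y_2$ and the Wronskian \eqref{W}. Writing $W(x)=-v\,x^{-c}e^{d/x}$, set
\[
y_0(x)=-\frac1v\left[\,y_2(x)\int_0^x t^{c-2}e^{-d/t}y_1(t)f(t)\,dt
+y_1(x)\int_x^\infty t^{c-2}e^{-d/t}y_2(t)f(t)\,dt\right].
\]
Using $|f|\le M$ together with the asymptotics of Theorems \ref{DCHE:y1} and \ref{DCHE:y2}, I would check that both integrals converge: as $t\to0^+$ the factor $t^{c-2}e^{-d/t}y_1(t)$ is bounded, because the exponential in $y_1\sim e^{d/x}x^{2-c}$ cancels $e^{-d/t}$ and leaves an $O(1)$ integrand, while as $t\to\infty$ the factor $t^{c-2}e^{-d/t}y_2(t)$ decays like $e^{-2\sqrt{at}}$. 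Consequently $y_0\in C^2(0,\infty)$, and the jump relation built into the Green's function yields $\mathcal L y_0=f$, so $y_0$ solves \eqref{ode2}.

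The heart of the argument, and the step I expect to be the main obstacle, is verifying that $y_0$ satisfies the two homogeneous boundary conditions $y_0(0^+)=0$ and $y_0(\infty)=0$. The subtlety is that in each limit one of the two integrals diverges, and this divergence must be shown to be exactly compensated by the vanishing of the prefactor. As $x\to0^+$, the first term tends to $0$ because $y_2(0^+)$ is finite and the inner integral runs over a shrinking interval; in the second term the integral $\int_x^\infty$ blows up like $x^{c}e^{-d/x}$, but multiplying by $y_1(x)\sim e^{d/x}x^{2-c}$ produces a product of order $x^2\to0$. A symmetric estimate as $x\to\infty$, weighing $y_2(x)\sim x^{1/4-c/2}e^{-2\sqrt{ax}}$ against the integral $\int_0^x$, which grows like $e^{2\sqrt{ax}}$, gives a product of order $x^{-1}\to0$, and the remaining term is handled the same way. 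These are routine asymptotic computations once the leading orders from \eqref{DCHE:y1asy2} and \eqref{DCHE:y2asy} are inserted, but they require care because of the cancellation of the exponentials. Granting them, $y_0$ meets \eqref{prop} with $\theta=0$, and $y:=y_0+\dfrac{\theta}{y_2(0^+)}\,y_2$ solves \eqref{ode2} and satisfies \eqref{prop} for the given $\theta$, which establishes existence.

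For uniqueness, suppose $y$ and $\tilde y$ both solve \eqref{ode2} and satisfy \eqref{prop}. Their difference $u=y-\tilde y$ solves the homogeneous equation \eqref{ode1} with $u(0^+)=u(\infty)=0$. Since every homogeneous solution that decays at $\infty$ is a scalar multiple of $y_2$ (because $y_1$ does not vanish there), we have $u=k\,y_2$; evaluating at $0^+$ and using $y_2(0^+)\ne0$ forces $k=0$, so $u\equiv0$. Alternatively, uniqueness follows from the maximum principle, the zeroth-order coefficient $-(ax+b)$ being strictly negative on $(0,\infty)$. This completes the proof.
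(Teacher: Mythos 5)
Your proposal is correct and follows essentially the same route as the paper: uniqueness from the linear independence and recessive characterizations of $y_1,y_2$, and existence via exactly the variation-of-parameters formula \eqref{var} (your kernel $-\tfrac1v t^{c-2}e^{-d/t}$ is just $1/(t^2W(t))$), with the same cancellation-of-exponentials estimates at both endpoints and the same shift by $\theta\,y_2(x)/y_2(0^+)$. The only point you gloss over --- the growth bound $y_1(x)=O\bigl(x^{1/4-c/2}e^{2\sqrt{ax}}\bigr)$ needed for the term $y_1(x)\int_x^\infty$ as $x\to\infty$, which the paper obtains from the dominant solution $\hat y_2$ in \eqref{y2hat}--\eqref{est1} --- is standard, and your explicit justification that $y_2(0^+)$ is finite and positive is a point the paper leaves implicit.
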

	\begin{proof}
		We first prove uniqueness. Let $y, \tilde y$ be two solutions of \eqref{ode2} satisfying \eqref{prop}. Then 
		$y-\tilde y$ is a solution of \eqref{ode1} with limit $0$ as $x\to0^+$ and $x\to\infty$. Therefore, $y-\tilde y$
		must be a multiple of $y_1$ and $y_2$. But $y_1, y_2$ are linearly independent by \eqref{W}.
		Therefore, $y=\tilde y$.  
		
		We now prove existence. It is enough to consider $\theta=0$ because if $y$ is a solution of \eqref{ode2} satisfying \eqref{prop} with $\theta=0$ then 
		$y(x)+\theta \frac{y_2(x)}{y_2(0)}$ is the desired solution of \eqref{ode2} satisfying \eqref{prop}.
	It is claimed that the solution (for $\theta=0$) is given by 
		\begin{equation}\label{var}
			y(x)=\left(\int_x^\infty \frac{y_2(t) f(t)}{t^2W(t)}\,dt\right) y_1(x)+\left(\int_0^x \frac{y_1(t)f(t)}{t^2 W(t)}\,dt\right)y_2(x) .
		\end{equation}
	
	The first integral in \eqref{var} exists for $x>0$.
		We use that $f$ is bounded, \eqref{DCHE:y2asy} and \eqref{W} to estimate  
		\begin{equation}\label{eq1}
			\left|\frac{y_2(t) f(t)}{t^2W(t)}\right| \le C t^{-\frac74+\frac{c}{2}}\exp(-2\sqrt{at}),\quad t\ge t_0>0.
		\end{equation}
		In this proof $C$ denotes a constant which may have different values in different inequalities.
		 The second integral in \eqref{var} exists for $x>0$.
		This follows from
		\begin{equation}\label{eq2}
			\left|\frac{y_1(t)f(t)}{t^2 W(t)}\right| \le C ,\quad 0<t<t_0,
		\end{equation} 
		where we used \eqref{DCHE:y1asy2} and \eqref{W}.
		It now follows that $y$ defined by \eqref{var} is a solution of \eqref{ode2}. 
		We show that the first term on the right-hand side of \eqref{var} converges to $0$ as $x\to\infty$.
		It is easy to find that, for every fixed $\alpha\in\R$,  
		\begin{equation}\label{eq3}
			\int_u^\infty s^\alpha e^{-s}\,ds\le C u^\alpha e^{-u} ,\quad u\ge u_0>0 .
		\end{equation}
		If we substitute $s=2\sqrt{at}$ we obtain
		\[
		\int_x^\infty t^{\frac12(\alpha-1)}\exp(-2\sqrt{at}) \,dt \le C x^{\frac{\alpha}{2}} \exp(-2\sqrt{ax}),\quad x\ge x_0>0 .
		\]
		Therefore, using this inequality with $\alpha=-\frac52+c$ and combining with \eqref{eq1}, 
		\[ \int_x^\infty \left|\frac{y_2(t) f(t)}{t^2W(t)}\right|\,dt\le Cx^{-\frac54+\frac{c}2}\exp(-2\sqrt{ax}),\quad x\ge x_0.\]
		
		Following similar calculation to that in the proof of Theorem \ref{DCHE:y2}, we obtain two independent solutions with known asymptotics near $x=\infty$, one of which is $y_2$ determined by the asymptotics in \eqref{DCHE:y2asy} and the other, denoted by $\hat{y}_2$, determined by the asymptotics
		\begin{equation}\label{y2hat}
		\hat{y}_2(x) \sim x^{1/4-c/2} e^{2\sqrt{ax}} \sum^\infty_{k=1} A_k x^{-k/2}, \qquad \mbox{ as } x \rightarrow \infty,
		\end{equation}
		 with $A_0=1$ and $A_k$'s determined by the recursive relation
		\[A_k=\frac{(k+1/2-c)(k-3/2+c)-4b}{4\sqrt{a}k}A_{k-1}+\frac{d}{k}A_{k-2},\] with the understanding that $A_{-1}=0$. 
		It follows from \eqref{DCHE:y2asy}, \eqref{y2hat} that  
		\begin{equation}\label{est1}
			0<y_1(x)\le C x^{\frac14-\frac{c}{2}}\exp(2\sqrt{ax}),\quad x\ge x_0 .
		\end{equation} 
		Therefore, 
		\[ \int_x^\infty \left|\frac{y_2(t) f(t)}{t^2W(t)}\right|\,dt\,\, y_1(x)\le C x^{-1},\quad x\ge x_0 .
		\]

		We show that the second term in \eqref{var} converges to $0$ as $x\to\infty$.
		It is easy to see that, for fixed $\alpha\in\R$, 
		\begin{equation}\label{eq4}
			\int_1^u s^\alpha e^s\,ds\le C u^\alpha e^u ,\quad u\ge u_0>0 .
		\end{equation}
		If we substitute $s=2\sqrt{at}$ we obtain
		\begin{equation}\label{eq5}
			\int_1^x t^{\frac12(\alpha-1)}\exp(2\sqrt{at}) \,dt \le C x^{\frac{\alpha}{2}} \exp(2\sqrt{ax}),\quad x\ge x_0 .
		\end{equation}
		From \eqref{W} and \eqref{est1}, we have,
		\[ \left|\frac{y_1(t) f(t)}{t^2W(t)}\right| \le C t^{-\frac74+\frac{c}{2}} \exp(2\sqrt{at}),\quad t\ge t_0 .\]
		Therefore, with $\alpha=-\frac52+c$ in \eqref{eq5},
		\[ \int_0^x \left|\frac{y_1(t)f(t)}{t^2W(t)}\right|\,dt \le Cx^{-\frac54+\frac{c}2}\exp(2\sqrt{ax}),\quad x\ge x_0.\]
		Using \eqref{DCHE:y2asy},
		\[ \int_0^x \left|\frac{y_1(t)f(t)}{t^2W(t)}\right|\,dt\,\, y_2(x)\le C x^{-1},\quad x\ge x_0 .\]

In what follows, we aim to prove that the first term on the right-hand side of \eqref{var} converges to $0$ as $x\to0^+$.
		It follows from \eqref{DCHE:y1asy}, \eqref{DCHE:y1asy2} that $y_2(t)$ is bounded as $t\to 0^+$. Therefore, we have
		\[ \left|\frac{y_2(t)f(t)}{t^2W(t)}\right|\le C t^{c-2}\exp(-\tfrac{d}{t}), \quad 0<t<t_0.\]
		When we substitute $s=-\frac{d}{t}$, $\alpha=-c$ in \eqref{eq4} we obtain
		\[ \int_x^\infty \left| \frac{y_2(t)f(t)}{t^2W(t)}\right|\,dt \le C x^{c}\exp(-\tfrac{d}{x}),\quad 0<x<x_0 .\]
		Therefore, by \eqref{DCHE:y1asy2},
		\[  \int_x^\infty \left| \frac{y_2(t)f(t)}{t^2W(t)}\right|\,dt\,\, y_1(x)\le Cx^2,\quad 0<x<x_0 .\]
		Similarly,  the second term on the right-hand side of \eqref{var} converges to $0$ as $x\to0^+$.
		since from \eqref{eq2}, we get
		\[ \int_0^x \left|\frac{y_1(t)f(t)}{t^2W(t)}\right|\,dt\,\,y_2(x)\le C x ,\quad 0<x<x_0 .\] 
		This completes the proof.
	\end{proof}

	We will shall the following lemma, whose proof is largely based on integration by parts and hence omitted from the paper.
	
	\begin{lemma}\label{l1}
		Let $\alpha\in\R$. We have the following asymptotics
		\begin{eqnarray}
			\int_u^\infty s^\alpha e^{-s}\,ds &\sim& u^\alpha e^{-u} \sum_{k=0}^\infty \alpha_k u^{-k}\quad\text{as $u\to\infty$},\label{asy1}\\
			\int_1^u s^\alpha e^s\,ds&\sim & u^\alpha e^u \sum_{k=0}^\infty (-1)^k \alpha_k u^{-k}\quad\text{as $u\to\infty$}\label{asy2},
		\end{eqnarray} where $\alpha_p:=\alpha(\alpha-1)\dots(\alpha-p+1).$
	\end{lemma}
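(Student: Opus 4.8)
The plan is to establish both expansions by repeated integration by parts, exploiting the fact that differentiating the algebraic factor $s^\alpha$ produces precisely the falling factorials $\alpha_k=\alpha(\alpha-1)\cdots(\alpha-k+1)$ appearing in the claimed series. The key structural observation is that each integration by parts peels off one boundary term of the form $\pm\alpha_k u^{\alpha-k}$ (times the relevant exponential) while lowering the power of $s$ in the residual integral by one. After $N$ steps one therefore obtains a partial sum matching the first $N$ terms of the series, plus a remainder integral that must be shown to be genuinely of lower order, which is exactly the Poincar\'e condition for an asymptotic expansion.

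First I would treat \eqref{asy1}. Writing $I(u)=\int_u^\infty s^\alpha e^{-s}\,ds$ and integrating by parts with $dv=e^{-s}\,ds$, $v=-e^{-s}$, the boundary term at infinity vanishes and the two minus signs cancel, giving $I(u)=u^\alpha e^{-u}+\alpha\int_u^\infty s^{\alpha-1}e^{-s}\,ds$. Iterating $N$ times yields
\[
I(u)=u^\alpha e^{-u}\sum_{k=0}^{N-1}\alpha_k u^{-k}+\alpha_N\int_u^\infty s^{\alpha-N}e^{-s}\,ds .
\]
For $N$ large enough that $\alpha-N\le0$, the monotonicity $s^{\alpha-N}\le u^{\alpha-N}$ on $[u,\infty)$ gives $\bigl|\alpha_N\int_u^\infty s^{\alpha-N}e^{-s}\,ds\bigr|\le|\alpha_N|\,u^{\alpha-N}e^{-u}$, which is $o(u^{\alpha-(N-1)}e^{-u})$; this establishes \eqref{asy1}. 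The same remainder bound is in fact already available from \eqref{eq3}.

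The argument for \eqref{asy2} is parallel, with two cosmetic differences. Integrating $\int_1^u s^\alpha e^s\,ds$ by parts uses $v=+e^s$, so the minus sign from the by-parts formula survives at each step; iterating thus generates the alternating signs $(-1)^k\alpha_k$ of the claimed series. After $N$ steps I obtain
\[
\int_1^u s^\alpha e^s\,ds=u^\alpha e^u\sum_{k=0}^{N-1}(-1)^k\alpha_k u^{-k}-e\sum_{k=0}^{N-1}(-1)^k\alpha_k+(-1)^N\alpha_N\int_1^u s^{\alpha-N}e^s\,ds,
\]
where the middle sum collects the boundary contributions at $s=1$. This second feature — the accumulated lower-endpoint constants — is harmless: the middle sum is an $O(1)$ constant, hence $o(u^\alpha e^u)$, and may be absorbed into the remainder. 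For the residual integral I would invoke inequality \eqref{eq4} (equivalently \eqref{eq5}) with exponent $\alpha-N$ to conclude $\bigl|\int_1^u s^{\alpha-N}e^s\,ds\bigr|\le C\,u^{\alpha-N}e^u$, again of the required lower order.

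The only genuinely nontrivial point is the remainder estimation, and even that reduces to the elementary bounds \eqref{eq3} and \eqref{eq4} already established in the proof of Theorem \ref{t1}; everything else is bookkeeping of the falling factorials and the alternating signs. I therefore expect no real obstacle, consistent with the paper's remark that the proof is omitted as routine integration by parts.
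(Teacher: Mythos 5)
Your proposal is correct and follows exactly the route the paper indicates: the paper omits the proof, stating only that it is ``largely based on integration by parts,'' and your iterated integration by parts with the falling-factorial boundary terms, the absorbed $O(1)$ lower-endpoint constants in \eqref{asy2}, and the remainder bounds via \eqref{eq3} and \eqref{eq4} supply precisely that argument.
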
 
	
	\begin{theorem}\label{t2}
		The unique solution $y(x)$ of \eqref{ode2}, \eqref{prop} with $f(x)=-u$ satisfies 
		\begin{equation}\label{asy3}
			y(x)\sim \sum_{k=0}^\infty C_k x^k\quad \text{as $x\to 0^+$},
		\end{equation}
		where $C_0=\theta$, $dC_1=b\theta-u$ and, for $k\ge 2$, 
		\begin{equation}\label{rec1}
			kd C_k=(b-c(k-1)-(k-1)(k-2))C_{k-1}+ a C_{k-2} .
		\end{equation} 
		Moreover,
		\begin{equation}\label{asy4}
			y(x)\sim \sum_{k=0}^\infty D_kx^{-k-1}\quad \text{as $x\to\infty$},
		\end{equation}
		where $D_0=ua^{-1}$ and, for $k\ge 1$,
		\begin{equation}\label{rec2}
			aD_k=(k(k+1)-ck-b)D_{k-1}+d(k-1)D_{k-2},\quad D_{-1}:=0 .
		\end{equation} 
	\end{theorem}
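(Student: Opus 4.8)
The plan is to produce both expansions from the variation-of-parameters representation \eqref{var} established in the proof of Theorem \ref{t1}, to evaluate the resulting integrals to all orders by means of Lemma \ref{l1}, and then to fix the coefficients by substituting the expansions back into the equation. Since $f(x)=-u$ is bounded and continuous, Theorem \ref{t1} applies and the unique solution is $y(x)=y_0(x)+\theta\,y_2(x)/y_2(0)$, where $y_0$ is the $\theta=0$ solution given by \eqref{var}. The correction $\theta\,y_2(x)/y_2(0)$ is exponentially small at infinity and, by \eqref{DCHE:y1asy2} and the boundedness of $y_2$ near $0$, merely contributes to the constant term of the expansion at the origin; hence it suffices to analyse $y_0$.

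First I would record the recursions by a formal calculation: inserting $\sum_{k\ge0}C_kx^{k}$ and $\sum_{k\ge0}D_kx^{-k-1}$ into $x^2y''+(cx+d)y'-(ax+b)y=-u$ and equating like powers of $x$ gives $C_0=\theta$, $dC_1=b\theta-u$, $D_0=u a^{-1}$, together with \eqref{rec1} and \eqref{rec2}. The content of this step is that the equation admits a \emph{unique} purely algebraic formal solution of each type once its leading coefficient is prescribed, so it only remains to show that $y_0$ is genuinely asymptotic to such a series.

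To establish the expansions themselves I would substitute the known asymptotics into the two integrands of \eqref{var}. For $x\to\infty$ I insert the full series \eqref{DCHE:y2asy} for $y_2$, the bound \eqref{est1} and the growing-type asymptotics of $y_1$ as in \eqref{y2hat}, and the Wronskian \eqref{W}; after the substitution $s=2\sqrt{at}$ each integral becomes a finite combination of the model integrals $\int_u^\infty s^\alpha e^{-s}\,ds$ and $\int_1^u s^\alpha e^{s}\,ds$, whose complete expansions are \eqref{asy1}--\eqref{asy2}. The decaying exponential from the first integral cancels the factor $e^{2\sqrt{ax}}$ in $y_1(x)$, and symmetrically for the second term, so only integer powers of $1/x$ survive and $y_0\sim\sum D_kx^{-k-1}$. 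For $x\to0^+$ the identical scheme works with $s=-d/t$: using \eqref{DCHE:y1asy2}, \eqref{W} and the boundedness of $y_2$, the factor $\exp(-d/t)$ is matched by the $\exp(d/x)$ hidden in $W$, the integrals again reduce to those of Lemma \ref{l1}, and only algebraic powers of $x$ remain. A convenient feature is that the connection (proportionality) constants between $y_1$, $y_2$ and the growing and decaying model solutions never need to be computed: once the expansions are known to have the stated form, the recursions of the previous step together with the normalisations $C_0=\theta$ and $D_0=u a^{-1}$ determine every coefficient. Since \eqref{DCHE:y2asy} and \eqref{DCHE:y1asy2} may be differentiated term by term, so may the expansions of $y_0$, which legitimises the substitution used to recover \eqref{rec1} and \eqref{rec2}.

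The step I expect to be the main obstacle is the exponential bookkeeping at infinity. The representation \eqref{var} pairs the exponentially large solution $y_1$ with an exponentially small integral (and the decaying $y_2$ with an exponentially large one), so the algebraic answer emerges only after the leading exponentials annihilate; one must therefore carry enough terms in each factor and control the tail integrals uniformly---via refinements of the crude bounds \eqref{eq3} and \eqref{eq5}---to guarantee that the remainder is of smaller order than every term retained. Lemma \ref{l1} is precisely the device that makes this all-orders cancellation transparent. A minor additional point is to confirm that the recessive solution at the origin, being $O(\exp(d/x))$ with $d<0$, lies beyond every algebraic power and hence leaves the coefficients $C_k$ untouched.
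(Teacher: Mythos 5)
Your proposal follows the paper's proof essentially verbatim: both analyse the two terms of the variation-of-parameters formula \eqref{var} via Lemma \ref{l1} after the substitutions $s=-d/t$ and $s=2\sqrt{at}$, reduce the general case to $\theta=0$, $u=1$ through \eqref{general}, and recover the recursions \eqref{rec1}, \eqref{rec2} by term-by-term differentiation and substitution into \eqref{ode2}. The one place you are quicker than the paper is at infinity, where the integrals a priori produce a series in half-integer powers $x^{-k/2}$; the paper closes this by noting that substitution into the ODE forces the odd-$k$ coefficients to vanish, which is precisely the justification your phrase ``only integer powers of $1/x$ survive'' requires.
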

	\begin{proof}
		We consider first the case $\theta=0$, $u=1$. Then the solution $y(x)$ is given by \eqref{var} with $f(x)=-1$.

		We show that the first term on the right-hand side of \eqref{var} with $f(x)=-1$ admits an asymptotic expansion of the form \eqref{asy3}.
		It follows from \eqref{DCHE:y1asy}, \eqref{DCHE:y1asy2} that
		\begin{equation}\label{asy5}
			y_2(t)=\sum_{k=0}^{K-1} E_k t^k+O(t^K)\quad\text{as $t\to0^+$} .
		\end{equation} 
		Therefore, 
		\[ \frac{y_2(t)(-1)}{t^2W(t)}=v^{-1}t^{c-2}\exp(-\tfrac{d}{t})\left(\sum_{k=0}^{K-1} E_k t^k+O(t^K)\right)\quad\text{as $t\to0^+$} .\]
		Now we use \eqref{asy2} with the substitution $s=-\frac{d}{t}$. Then we obtain
		\[ \int_x^\infty \frac{y_2(t)(-1)}{t^2W(t)}\,dt=x^c\exp(-\tfrac{d}{x})\left(\sum_{k=0}^{K-1} F_k x^k+O(x^K)\right)\quad\text{as $x\to0^+$} .\]
		Multiplying this integral by $y_1(x)$ and using \eqref{DCHE:y1asy2}, we obtain the desired asymptotic expansion.
		
		We show that the second term on the right-hand side of \eqref{var} with $f(x)=-1$ admits an asymptotic expansion of the form \eqref{asy3}.
		From \eqref{DCHE:y1asy2} and \eqref{W} we get
		\[ \frac{y_1(t)(-1)}{t^2 W(t)}=v^{-1}\left(\sum_{k=0}^{K-1} A_k t^k+O(t^K)\right) \quad\text{as $t\to0^+$} .\]
		Integrating this equation on both sides from $t=0$ to $t=x$, multiplying by $y_2(x)$ and using \eqref{asy5}
		we arrive at the desired asymptotic expansion.

		By differentiating \eqref{var}, we show that $y'(x)$ and $y''(x)$ have asymptotic expansions as  $x\to0^+$ which are obtained  by differentiating the asymptotic expansion of $y(x)$ term-by-term.
		If $y(x)$ is the solution for $\theta=0$ and $u=1$, then the solution $\tilde y(x)$ for real $\theta,u$ is
		\begin{equation}\label{general}
			\tilde y(x)=u y(x)+\theta\frac{y_2(x)}{y_2(0)} .
		\end{equation} 
		Therefore, $\tilde y(x)$ also admits an asymptotic expansions of the form \eqref{asy3}, and it may be differentiated.
		Substituting the  expansions for $\tilde y(x), \tilde y'(x), \tilde y''(x)$ in \eqref{ode2} with $f(x)=-u$ and comparing coefficients, we obtain the stated recursion for the coefficients $C_k$.

		Now we show that the first term $g(t)$ on the right-hand side of \eqref{var} with $f(x)=-1$ admits an asymptotic expansion of the form \begin{equation}\label{asy6}
			g(t)\sim \sum_{k=2}^\infty E_k x^{-k/2} \quad \text{as $x\to\infty$.}
		\end{equation}
		We know from \eqref{DCHE:y2asy} that 
		\[ y_2(t)=t^{\frac14-\frac{c}{2}}\exp(-2\sqrt{at})\left(\sum_{k=0}^{K-1} A_k t^{-k/2}+O(t^{-K/2})\right) \quad\text{as $t\to\infty$.}\]
		Therefore,
		\[ \frac{y_2(t)(-1)}{t^2W(t)}=t^{-\frac74+\frac{c}{2}}\exp(-2\sqrt{at})\left(\sum_{k=0}^{K-1} F_k t^{-k/2}+O(t^{-K/2})\right)
		\quad\text{as $t\to\infty$.}
		\]
		We now use \eqref{asy1} with the substitution $s=2\sqrt{at}$. Then we obtain
		\[ \int_x^\infty \frac{y_2(t)(-1)}{t^2W(t)}\,dt= x^{-\frac54+\frac{c}{2}}\exp(-2\sqrt{ax})\left(\sum_{k=0}^{K-1} G_k x^{-k/2}+O(x^{-K/2})\right) \quad\text{as $x\to\infty$} .\]
		If we multiply this integral by $y_1(x)$ and use \eqref{DCHE:y2asy}, \eqref{y2hat}, we obtain an asymptotic expansion of the 
		desired form \eqref{asy6}.
		
	In a similar way we show that the second term on the right-hand side of \eqref{var} with
		$f(x)=-1$ admits an asymptotic expansion of the form \eqref{asy6}.
		Using \eqref{general} we see that $\tilde y(x)$ also admits an asymptotic expansion of the form \eqref{asy6}, and we 
		obtain asymptotic expansions for $\tilde y'(x), \tilde y''(x)$ by  differentiating the asymptotic expansion for $\tilde y(x)$ term-by-term.
		By substituting these expansions in \eqref{ode2} with $f(x)=-u$ and compare the coefficients, we notice that the coefficients of the terms $x^{-k/2}$ with odd $k$ must vanish. Thus we obtain the recursion \eqref{rec2} and the initial value $D_0$. 
	\end{proof}

Assume that the dynamics of	stock prices with divided paying are driven by the geometric Brownian motion with affine drift defined by \eqref{AsianSDE} under the risk-neutral measure.
While the pricing of Asian options has been studied extensively, all existing literature are restricted to the case of options on non-dividend paying stocks. 
Applying the analysis of the doubly-confluent Heun equation, we can now take into account the more general case of options on dividend-paying stocks.
We are interested in the no-arbitrage price of a $T$-period Asian call option with the strike price of $K$, i.e.
\[\mathrm{AC}:=\Exp_{x_0}\left[e^{-rT}\left(\frac1T \int^T_0 X_t \df t -K\right)_+\right]=\frac1T e^{-rT}\Exp_{x_0}[(Y_T-K^\ast)_+],\]
where \[Y_T=\int^T_0 X_t  \df t,\qquad K^\ast=KT>0.\] For notational brevity, we assume that the expectation is taken under the risk-neutral probability measure for which
\[\Pro[X_0=x_0, Y_0=0]=1.\]
We can compute the following Laplace transform w.r.t. $K^\ast$. Since the integrands are nonnegative, we exchange the order of integration and obtain
\be
\int^\infty_0 e^{-wK^\ast} \Exp_{x_0}[(Y_T-K^\ast)_+] \df K^\ast&=&\Exp_{x_0}\left[ \int^{Y_T}_0 e^{-wK^\ast}(Y_T-K^\ast) \df K^\ast \right]\nonumber\\
&=& \frac1w \Exp_{x_0}(Y_T)-\frac1{w^2}+\frac1{w^2}\Exp_{x_0} [e^{-wY_T}]. \label{lap}
\ee
An analytic expression for the first term in \eqref{lap} is already obtained in Feng and Volkmer \cite[Proposition 3.4]{FV2}.\footnote{In Feng and Volkmer \cite{FV2}, this quantity was represented as the time-integral of the process $X_t$ up to the earlier of the first time it hits zero and a fixed time $T$, i.e. $\Exp\left[ \int^{\tau \wedge T}_0 X_t   \df t \right]$, where $\tau:=\inf\{t: X_t\le 0\}$.} 
Therefore, the only unknown quantity to be determined is the third term in \eqref{lap}.
 Once efficient algorithms for computing both terms are obtained, we would invert the Laplace transform with respect to $K^\ast$ to determine the price of Asian option. 
 One can easily find that it is a special case of $h$ defined by \eqref{hrep} by taking $\lambda=0$. 
Since the rest of computation for pricing Asian options is irrelevant to the discussion of asymptotics of Heun equation, we shall only focus on the computation of $\tilde{h}$, i.e. the Laplace transform of $h$, which can be derived by letting $\lambda=0$ in \eqref{odeh}.


Returning to the problem on the Laplace transform, we observe that \eqref{odeh} is a special case of the inhomogeneous Heun's equation \eqref{ode1} with $f(x)=-u$ and boundary conditions \eqref{prop}, 
	\[ a=\frac{2w}{\sigma^2},\quad b=\frac{2s}{\sigma^2},\quad c=\frac{2r}{\sigma^2},\quad d=-\frac{2\delta}{\sigma^2},\quad
	u=\frac{2}{\sigma^2},\quad\theta=\frac{1}{s}.\]

Although a numerical algorithm can be applied directly to solve the boundary value problem \eqref{odeh} with \eqref{bd1} and \eqref{bd2}, one would have to truncate the domain $(0, \infty)$ to some finite interval for practical reason. Then the asymptotics of the particular solution in \eqref{asy3} and \eqref{asy4} become useful to determine values at boundary points.

Here we provide a numerical example to show the solution to the boundary value problem using the asymptotics. The following set of parameters are used for computation.
\[s=2, w=1, \sigma=0.3, r=0.05, \delta=0.02.\]
To avoid the singularities $x=0$ and $x=\infty$, we consider the left-end-point to be $x_0=0.01$ and the right-end-point to be $x_1=100$. We use $30$ terms of the asymptotic formulas \eqref{asy3} and \eqref{asy4} to find $\tilde{h}(x_0)=0.499062426649333$ and $\tilde{h}(x_1)=0.00980754872574340.$ Finally, we use Maple's own default BVP solver to determine the numerical solution, which is shown in Figure \ref{fig:bvp}.	
	
	\begin{figure}[ht!]
		\centering
		\includegraphics[width=0.5\textwidth]{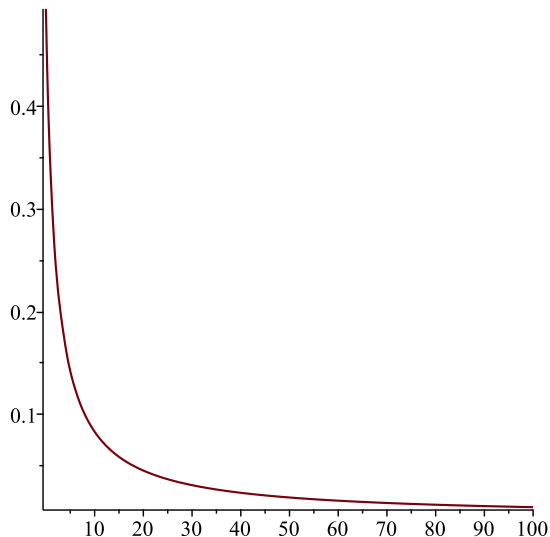}
		\caption{Solution to the boundary value problem \eqref{odeh}.}
		\label{fig:bvp}
	\end{figure}
}	
%

\section*{References}

\end{document}